\newcommand{\ket}[1]{|#1\rangle}
\theoremstyle{plain}
\newtheorem{theorem}                 {Theorem}
\newtheorem{proposition}             {Proposition}
\theoremstyle{remark}
\newcommand{\prlsection}[1]{\paragraph*{#1.---}\hspace{-1em}}
\begin{document}

\let\reusablemaketitle\maketitle
\let\reusableauthor\author
\let\reusableaffiliation\affiliation

\newcommand{\affiltwo}{School of Informatics, University of Edinburgh, 10 Crichton St, Edinburgh EH8 9AB, United Kingdom}
\newcommand{\affilone}{Sorbonne Université, CNRS, Laboratoire d'Informatique de Paris 6, F-75005 Paris, France}

\title{Quantum Advantage from Sequential-Transformation Contextuality}

\author{Shane Mansfield}%
\email{shane.mansfield@lip6.fr}%
\affiliation{\affilone}%

\author{Elham Kashefi}%
\affiliation{\affilone}%
\affiliation{\affiltwo}

\date{\today}

\begin{abstract}
We introduce a notion of contextuality for transformations in sequential contexts, distinct from the Bell-Kochen-Specker and Spekkens notions of contextuality.
Within a transformation-based model for quantum computation we show that strong sequential-transformation contextuality is necessary and sufficient for deterministic computation of non-linear functions if classical components are restricted to $\bmod{2}$-linearity and matching constraints apply to any underlying ontology. For probabilistic computation, sequential-transformation contextuality is necessary and sufficient for advantage in this task and the degree of advantage quantifiably relates to the degree of contextuality.
\end{abstract}


\maketitle

Contextuality is a key non-classical phenomenon exhibited by quantum systems, which was first considered by Bell \cite{bell2} and by Kochen and Specker \cite{ks}. It has been the subject of renewed interest recently as a range of results have established it to be the essential ingredient for enabling  quantum advantages over classical implementations of a variety of informatic tasks \cite{andersbrowne,raussendorfcontextuality,abmprl}, simulation of quantum processes \cite{karanjai2018contextuality}, and for enabling universal quantum computing \cite{howardmagic,delfosse2015wigner,pashayan2015estimating,juanito2017,catani2017state}\footnote{Some of these references build on earlier work relating quantum advantage to Wigner function negativity \cite{galvao2005discrete,veitch2012negative}, known to be an equivalent notion of non-classicality to contextuality \cite{spekkens2008negativity}.}. A broader notion of contextuality due to Spekkens \cite{spekkens2009preparation} has also been shown to be essential to quantum advantages relating to state discrimination and one-way communication protocols \cite{chailloux2016optimal,saha2017state,schmid2018contextual,saha2018preparation,ghorai2018optimal}. However, questions remain over which forms of contextuality provide advantage in which precise settings \cite{lillystone2018contextuality} and whether existing notions of contextuality are sufficient to account for all instances of quantum advantage.
For example, there exist a variety of advantages achievable with a single qubit \cite{knill1998power,galvao2003substituting,dkk}, where Bell-Kochen-Specker (BKS) contextuality cannot arise \cite{ks,gleason1957measures} and to which there is no apparent link to the Spekkens version. This raises the important question of which non-classical feature could be at play if not contextuality of these kinds.

We introduce a notion of contextuality for transformations performed in sequential contexts that is inequivalent to the notion of transformation contextuality introduced by Spekkens \cite{spekkenscontextuality}. It is necessarily present in a recently discovered form of quantum advantage in shallow circuits \cite{bravyi2017quantum}. We will show, via a Mermin-style \cite{mermin:90a,mermin:90} parity argument, that it is also crucial in enabling increased computational power in the single qubit example of \cite{dkk}.
The setting for that example is a transformation-based model of quantum computing, which we call here $l2$-TBQC, that was shown to be useful in achieving secure delegated computing. In the model, a classical control computer, whose power is limited to $\bmod{2}$-linear computation, may interact with a quantum resource, by which its computational power may be enhanced. As with the analogous measurement-based model, $l2$-MBQC \cite{raussendorfcontextuality}, which was the setting for the results of \cite{andersbrowne,raussendorfcontextuality,abmprl},
it can provide a useful tool for probing the roots of quantum advantage. In this setting, we show more generally that sequential-transformation contextuality is necessary and sufficient to enable advantage in the task of probabilistically computing any non-linear function whenever classical ontologies are required to respect the computational assumptions. Moreover, the degree of contextuality can be related to the probability of success, and in particular, strong (i.e.,~maximal) contextuality is necessary for deterministic computation of any non-linear function.

Our results trace an arc that parallels developments relating BKS contextuality to quantum advantage in $l2$-MBQC: Anders and Browne provided an example in which a contextual resource is sufficient for the computation of a particular non-linear function \cite{andersbrowne}; Raussendorf then proved that strong contextuality is necessary for any deterministic non-linear computation \cite{raussendorfcontextuality}, as initially observed by Hoban \textit{et al.}~for non-adaptive $l2$-MBQC \cite{hoban2011non} based on an early version of \cite{raussendorfcontextuality};
he also showed that contextuality is necessary for quantum advantage in the task of probabilistically computing any non-linear function; this latter result was later sharpened to show more precisely how the degree of contextuality as measured by the contextual fraction relates to probability of success in \cite{abmprl}. Our results set the stage for further investigation of how sequential-transformation contextuality may relate to quantum advantages, speedups and the onset of universality in other settings, as the results of \cite{howardmagic,delfosse2015wigner,pashayan2015estimating,juanito2017} do for BKS contextuality.

\prlsection{Ontological models}
Quantum theory exhibits a number of apparently non-intuitive features. Crucially, in many cases there exist no-go theorems that establish that there is no way these features can be explained away by recourse to any deeper or more complete theory that would obey certain classical intuitions \cite{epr}. Some such non-classical features are non-locality \cite{bell1} (BKS) contextuality \cite{bell2,ks} \footnote{There now exist a number of unified treatments of non-locality and (BKS) contextuality \cite{abramskybrandenburger,afls,csw,cbd}, which are closely interrelated \cite{afls,horse1}, and whose relations to the other kinds of non-cassicality have also been explored elsewhere \cite{cbdmacro,wester,kunjwal2017beyond}.}, forms of preparation and transformation contextuality \cite{spekkenscontextuality}, while others relate to macro-realism \cite{lg,entanglementintime,timpsonmaroney,JMAmacro} and the ontic nature of the quantum state \cite{liang:11,pusey:12,colbeck:12,colbeck:13,hardy:13a,montina:15,mansfield:16,JMAontic}.
A convenient formalism for treating such theorems is that of ontological models, which we briefly set out next. Note that in this work when we speak of ontological models we will not be assuming any additional features beyond what is explicitly set out below (e.g.,~of the kind present in \cite{spekkenscontextuality}).

The central component is an ontic state space $\Lambda$, comprising the states of a hypothetical underlying theory. Preparation of a quantum state $\rho$ results in an ontic state sampled according to a probability distribution $d_\rho$ on $\Lambda$ \footnote{Ontological theories can be defined more generally with measures, but this will not be necessary for our present purposes.}.
In the simplest case, a quantum transformation $U$ corresponds to a measurable function $f_U:\Lambda \rightarrow \Lambda$.
For consistency we require that $f_{U*}d_\rho=d_{U \rho U^\dagger}$, where the left-hand side is the push forward of $d_\rho$ along $f_U$, defined by $f_{U*}d_\rho(\lambda) = d_\rho [ f_{U}^{-1}(\lambda) ] $. We also require that the function corresponding to the identity operator simply maps each ontic state to the $\delta$ function centered on that state, ensuring that $f_{\mathbbm{1}*}d_\rho = d_\rho$ for all preparations $\rho$. 
In particular, the requirements entail that unitaries correspond to invertible functions.
A quantum measurement $M$ corresponds to a function $\xi_M : \Lambda \rightarrow P(O)$ which assigns to each ontic state a probability distribution over the set of outcomes $O$. For any combination of preparation, transformation, and measurement, the ontological theory predicts that the empirical statistics, $e_{\rho,U,M} \in P(O)$, are given by
\begin{equation}\label{eq:emp}
e_{\rho,U,M} = \sum_{\lambda \in \Lambda} \, d_\rho(\lambda) \, \xi_M(f_U(\lambda)) \, .
\end{equation}
In fact, our results apply more generally to ontological models in which transformations may correspond to stochastic mixtures of measurable functions. However, we will see shortly that, for our present purposes, since such an ontological model can always be expressed as a convex decomposition of ones in which transformations are deterministic, it will suffice to establish no-go properties for those with deterministic transformations.
No-go theorems arise when it is found that ontological models satisfying some additional, perhaps ``classical'', assumptions are unable to realise the empirical predictions of quantum theory.

\prlsection{(Non-)contextuality}
In the BKS sense, non-contextuality is an assumption of classicality that applies when certain finite sets of compatible measurements may be performed jointly in contexts. It requires that for each valid context $C$ compatibility is reflected at the ontological level through factorisability of the joint measurement function $\xi_C : \Lambda \rightarrow P(O^{\left| C \right|})$; i.e.,
\begin{equation}\label{eq:ks}
\xi_C  = \prod_{M \in C} \xi_{M} \, .
\end{equation}
Implicit in this is the crucial requirement that, for any measurement $M$ occurring in contexts $C$ and $C'$, its ontological representation $\xi_M$ is context independent; i.e.,
\[
\xi_{M^{(C)}} = \xi_{M^{(C')}} \, .
\]
This description of non-contextuality via factorisability is equivalent to the description in terms of global valuations that may be more familiar to some readers \cite{abramskybrandenburger}.


Next, we mention some specific instances arising from Spekkens' general notion of non-contextuality \cite{spekkenscontextuality}. Measurement non-contextuality in the explicit sense treated in the no-go results of \cite{spekkenscontextuality} relaxes (\ref{eq:ks}) to the weaker requirement that
\[
\left. \xi_C\right|_M = \xi_M \, ,
\]
for all $M$ and $C$ such that $M \in C$, where $\left. \xi_C\right|_M$ denotes the marginalisation of $\xi_C$ to $M$.

Transformation and preparation non-contextuality in the explicit sense treated in the no-go results of \cite{spekkenscontextuality} takes as context any convex decomposition of a given transformation or preparation. This has an operational motivation. Suppose, as a concrete example, that some transformation $T$ admits the following unitary decompositions:
\begin{align}
T &= \frac{1}{2} U_a + \frac{1}{2} U_A \label{eq:C}\tag{$C$} \, , \\
T &= \frac{1}{3} U_a + \frac{1}{3} U_b + \frac{1}{3} U_c \, . \label{eq:C'}\tag{$C'$}
\end{align}
Operationally, context \ref{eq:C} is ``apply $U_a$ or $U_A$ uniformly at random'', and context \ref{eq:C'} is ``apply $U_a$, $U_b$ or $U_c$ uniformly at random''; quantum mechanically the contexts are equivalent.
Non-contextuality requires that convex decompositions are reflected at the ontological level; i.e.,~in this instance,
\[
f_T = \frac{1}{2} f_{U_a} + \frac{1}{2} f_{U_A} = \frac{1}{3} f_{U_a} + \frac{1}{3} f_{U_b} + \frac{1}{3} f_{U_c}  \, .
\]
Again, it is implicit that ontological representations of transformations and preparations are independent of operational context; e.g.,
\[
f_{U_a^{(C)}} = f_{U_a^{(C')}} \, .
\]

\prlsection{Sequential transformations}
With the preceding versions for comparison, we now introduce a version of non-contextuality for transformations in sequential contexts. It requires that for each finite sequence of transformations, $C=\{U_i\}_{i=1}^t$, sequential composition is reflected at the ontological level; i.e.,
\begin{equation*}\label{eq:transnc}
f_{U_t\cdots U_1} = f_{U_t} \circ \dots \circ f_{U_1} \, .
\end{equation*}
It is assumed that the ontological representations of transformations are independent of sequential context; i.e.,~whenever a transformation $U$ occurs in contexts $C$ and $C'$, it holds that
\[
f_{U^{(C)}} = f_{U^{(C')}} \, .
\]
When a set of empirical data or predictions cannot be reproduced by an ontological model satisfying this property, it is said to be contextual.

Contextuality in our sense implies that the system of study cannot have an ontology in which transformations correspond to modular, composable operations on ontic states, such that they are well defined independently of which transformations may have been performed previously or will be performed subsequently. Either we must reject the ontological picture entirely or give up on these highly intuitive, classical properties. Note that one plausible, if conspiratorial, mechanism for introducing some contextuality might be through causal dependence on transformations having appeared earlier in the sequence, but even this kind of mechanism is precluded when the transformations being modelled commute.

The constant-depth quantum circuits of \cite{bravyi2017quantum} provide a concrete example of sequential-transformation contextuality as they can at best be simulated by classical circuits whose depth grows logarithmically in the size of the input. If a modular, non-contextual ontological description of gate transformations at each step in the circuit were possible, then it would give rise to classical circuits for the same task, which would also have constant depth. Connections to quantum advantage in this setting will be investigated in future work; here we focus on examples in a more restricted setting.

\prlsection{Quantification}
An empirical model $e = \{ e_C \}$, associates with each context $C$ a distribution over observed outcomes \cite{abramskybrandenburger}. Similar to \cite{abmprl}, given any empirical model and appropriate version of contextuality, we may consider convex decompositions of the form
\begin{equation}\label{eqn:edecomp}
e = \omega e^{\mathrm{NC}} + (1 - \omega) e' \, ,
\end{equation}
where $e^{\mathrm{NC}}$ and $e'$ are also empirical models, and $e^{\mathrm{NC}}$ is non-contextual. The maximum value of $\omega$ over all such decompositions is the non-contextual fraction of $e$, written $\mathrm{NCF}(e)$, and correspondingly, the contextual fraction of $e$ is $\mathrm{CF}(e) := 1 - \mathrm{NCF}(e)$ \footnote{BKS contextuality has also been quantified in such a manner in \cite{abramskybrandenburger,nccontent,grudka2014quantifying,thesis,abm-qpl16}.}.
For BKS contextuality, the contextual fraction corresponds to the maximum achievable normalised violation by $e$ of any generalised Bell inequality \cite{abmprl}. Here, however, we use it to quantify sequential-transformation contextuality. Using the terminology of the hierarchy of BKS contextuality introduced in \cite{abramskybrandenburger}, an empirical model is said to be strongly contextual when $\mathrm{CF}(e) = 1$.

For a given experimental scenario, the set of all the possible $e^{\mathrm{NC}}$ is convex, and any extremal point corresponds simply to fixing a deterministic function $f_U:\Lambda \rightarrow \Lambda$ for each transformation $U$ featuring in the scenario. Strong contextuality thus arises in the extreme case that no global assignment of deterministic functions to transformations is consistent with even a fraction of the empirical behaviour.

\prlsection{$l2$-TBQC}
We consider a classical control computer restricted to $\bmod{2}$-linear computation that can interact with a resource, which may be quantum, as follows. The resource is prepared in a fixed state, the control computer may interact with it by means of controlled transformations, then a fixed measurement is performed on the resource and its outcome returned to the control computer. This captures, for example, the single qubit protocols of \cite{dkk}, which were considered for their security features in a setting in which a client delegates certain operations making up an $l2$-TBQC, such as state preparation and measurement, to a server.

Note that, independent of the $l2$ restriction, any measurement-based quantum computation \cite{raussendorfbriegel} can equivalently be expressed as a TBQC, since choice of a measurement setting is equivalent to choice of a transformation prior to a fixed measurement.

An example of an $l2$-TBQC that performs a basic non-linear function, the $\mathrm{AND}$ gate on classical input bits $a$ and $b$,
\[
g(a,b) = (a \oplus 1) \otimes (b \oplus 1) \oplus 1 \, ,
\]
is the following (Fig.~\ref{fig})
\footnote{This is a slightly simplified version of the implementation from \cite{dkk}. Note that in the homomorphism from $\mathbb{Z}_2$ to the booleans which one would perform in order to interpret the function as the logical $\mathrm{AND}$ gate, the roles of $0$ and $1$ are exchanged, i.e.,~$0 \mapsto 1$ and $1 \mapsto 0$.}.
The control computer receives inputs $a$ and $b$. For the resource, the fixed state is the qubit state $\ket{+}$, the fixed measurement is given by the Pauli operator $\sigma_X$, and the controlled transformations are $U(a)$, then $V(b)$, then $W({a \oplus b})$, where
\begin{align*}
U(0) &= V(0) = W(0) = I \, ,\\
U(1) &= V(1) = W(1) = \left( \begin{array}{cc} 1 & 0 \\ 0 & e^{i \sfrac{\pi}{2}} \end{array} \right) \, .
\end{align*}
Notice that all transformations commute.
The output of the computation is the measurement outcome interpreted in $\mathbb{Z}_2$, with eigenvalues $+1$ and $-1$ mapped to $0$ and $1$, respectively.
In terms of complexity classes, access to a qubit quantum resource promotes the computational power from the class $\oplus L$ \cite{damm1990problems,aaronson2004improved} to $P$, as with the example in \cite{andersbrowne} in the setting of $l2$-MBQC.

\begin{figure}
\caption{\label{fig} The basic single qubit $\mathrm{AND}$ protocol from \cite{dkk}.}
\[
\Qcircuit @C=1em @R=1em {
\lstick{\ket{+}} & \gate{U(a)} & \gate{V(b)} & \gate{W(a\oplus b)} & \meter & \sigma_X
}
\]
\end{figure}
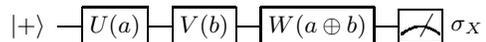

\prlsection{$\oplus L$-ontology}
Of course, classical computers can perfectly well compute non-linear functions and they also constitute valid non-contextual ontologies. To pose a meaningful computational question about whether a resource may be used to boost power from $\oplus L$ to $P$, therefore, we will restrict attention to $\oplus L$-ontologies, which we define as follows. Recalling that $\oplus L$ circuits are built entirely of $\mathrm{NOT}$ and controlled-$\mathrm{NOT}$ ($\mathrm{CNOT}$) gates \cite{aaronson2004improved}, we will suppose that available transformations are built from these and act on an ontic state space $\mathbb{Z}_2^s$, for some $s \in \mathbb{N}$.
In what follows, we will be interested in protocols in which transformations commute. These can already permit efficient solutions to problems for which it is believed there can be no efficient classical solution \cite{bremner2010classical}. For transformations in commutative $\oplus L$ ontologies it holds that, for any transformation $U$,
\begin{equation}\label{eq:ontf}
f_U(\bm{\lambda}) = \left( I \oplus A_U \right) \bm{\lambda} \oplus \bm{u} \, ,
\end{equation}
where $A_U$ is an $s \times s$ matrix over $\mathbb{Z}_2$ containing only off-diagonal entries and $\bm{u} \in \mathbb{Z}_2^s$ [see Appendix]. For composition of transformations $\{U_i\}_{i=1}^t$ with ontological representations determined by $\{A_i,\bm{u}_i\}$ it holds that
\begin{equation*}\label{eq:seq}
f_{U_t} \circ \cdots \circ f_{U_1}(\bm{\lambda}) = \bm{\lambda} \oplus \bigoplus_{i=1}^{t} A_{i} \bm{\lambda} \oplus \bigoplus_{i=1}^{t} \bm{u}_i \, .
\end{equation*}
A dichotomic measurement in an $\oplus L$ ontology can most generally be described by a transformation followed by output of the bit value of a fixed entry $j$ of the final ontic state vector; i.e.~$\bm{\lambda}' \cdot \bm{\delta}$ where $\bm{\lambda}',\bm{\delta} \in \mathbb{Z}_2^s$ are the post-transformation ontic state and the vector with $j$th entry $1$ and $0$'s elsewhere, respectively.

\section{Results}
\begin{proposition}\label{prop:dkk}
Any commutative $\oplus L$-ontological realisation of the $\mathsf{AND}$ $l2$-TBQC is transformation contextual.
\end{proposition}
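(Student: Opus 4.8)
The plan is a Mermin-style parity argument over the four classical inputs $(a,b)\in\mathbb{Z}_2^2$. First I would record the empirical content of the protocol: since $U(1)=V(1)=W(1)=\mathrm{diag}(1,e^{i\pi/2})$ squares to $\sigma_Z$ and all the gates commute, on input $(a,b)$ the resource $\ket{+}$ is carried to $U(1)^{a+b+(a\oplus b)}\ket{+}=\sigma_Z^{a\vee b}\ket{+}$, so the fixed $\sigma_X$ measurement returns the bit $g(a,b)=a\oplus b\oplus ab$ with certainty. The only property of $g$ that I need is $\bigoplus_{(a,b)\in\mathbb{Z}_2^2}g(a,b)=0\oplus1\oplus1\oplus1=1$; equivalently, $g$ is not $\mathbb{Z}_2$-affine.

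Next I would suppose, towards a contradiction, that a commutative $\oplus L$-ontological realisation exists and satisfies sequential-transformation non-contextuality; by the reduction noted above I may take its transformations to be deterministic functions. Non-contextuality pins down one context-independent ontological function for each of $U(1),V(1),W(1)$ --- in particular the same $f_{U(1)}$ whether $U(1)$ is triggered by input $(1,0)$ or $(1,1)$, and the same $f_{W(1)}$ whether $W(1)$ is triggered by $(1,0)$ or $(0,1)$ --- and $f_{U(0)}=f_{V(0)}=f_{W(0)}$ is the identity since $U(0)=V(0)=W(0)=\mathbbm{1}$. By \eqref{eq:ontf} the three nontrivial functions are affine, with data $(A_U,\bm{u})$, $(A_V,\bm{v})$, $(A_W,\bm{w})$, so for $a,b\in\{0,1\}$ the transformations $U(a)$, $V(b)$, $W(a\oplus b)$ carry data $(aA_U,a\bm{u})$, $(bA_V,b\bm{v})$, $((a\oplus b)A_W,(a\oplus b)\bm{w})$. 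Applying the sequential-composition requirement $f_{W(a\oplus b)\,V(b)\,U(a)}=f_{W(a\oplus b)}\circ f_{V(b)}\circ f_{U(a)}$ together with the composition rule $f_{U_t}\circ\cdots\circ f_{U_1}(\bm{\lambda})=\bm{\lambda}\oplus\bigoplus_i A_i\bm{\lambda}\oplus\bigoplus_i\bm{u}_i$ for commutative $\oplus L$ ontologies, the transformation enacted on input $(a,b)$ is
\[
\bm{\lambda}\longmapsto\bm{\lambda}\oplus[aA_U\oplus bA_V\oplus(a\oplus b)A_W]\bm{\lambda}\oplus[a\bm{u}\oplus b\bm{v}\oplus(a\oplus b)\bm{w}] .
\]
Composing this with the fixed dichotomic readout, which by the $\oplus L$ form of a measurement is a fixed $\mathbb{Z}_2$-affine functional $\bm{\mu}\mapsto\bm{\delta}\cdot\bm{\mu}\oplus c$ of the pre-measurement ontic state, the output produced on input $(a,b)$ from ontic state $\bm{\lambda}$ has the form
\[
\mathrm{out}_{a,b}(\bm{\lambda})=k_0\oplus a\,p_U\oplus b\,p_V\oplus(a\oplus b)\,p_W ,
\]
where $k_0=\bm{\delta}\cdot\bm{\lambda}\oplus c$ and $p_U=\bm{\delta}\cdot(A_U\bm{\lambda}\oplus\bm{u})$, $p_V=\bm{\delta}\cdot(A_V\bm{\lambda}\oplus\bm{v})$, $p_W=\bm{\delta}\cdot(A_W\bm{\lambda}\oplus\bm{w})$ are bits depending only on $\bm{\lambda}$ and the fixed ontological data.

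The punchline is that, because $a\oplus b$ is itself $\mathbb{Z}_2$-linear in $(a,b)$, the expression for $\mathrm{out}_{a,b}(\bm{\lambda})$ is $\mathbb{Z}_2$-affine in $(a,b)$, and therefore $\bigoplus_{(a,b)\in\mathbb{Z}_2^2}\mathrm{out}_{a,b}(\bm{\lambda})=0$ for every $\bm{\lambda}$. But determinism of the protocol, via \eqref{eq:emp}, forces $\mathrm{out}_{a,b}(\bm{\lambda})=g(a,b)$ for every $\bm{\lambda}$ in the (nonempty) support of the preparation distribution, whence $\bigoplus_{(a,b)\in\mathbb{Z}_2^2}g(a,b)=0$ --- contradicting the parity computed in the first step. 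I expect the one genuinely delicate point to be bookkeeping the role of non-contextuality: one must be explicit that it is precisely the context-independence of $f_{U(1)},f_{V(1)},f_{W(1)}$ together with the sequential-composition rule that collapses the a priori arbitrary dependence of the enacted transformation on $(a,b)$ into a linear ``switching on'' of three fixed affine maps, so that --- together with the $\oplus L$ restriction on the measurement --- the nonlinear $ab$ term can never be produced.
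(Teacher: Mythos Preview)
Your proof is correct and follows essentially the same Mermin-style parity argument as the paper: both establish that under non-contextuality the output bit is $\mathbb{Z}_2$-affine in $(a,b)$, so its XOR over the four inputs vanishes, contradicting $\bigoplus_{a,b} g(a,b)=1$. The only cosmetic differences are that the paper writes out the four constraint equations explicitly (allowing $U(0),V(0),W(0)$ to have arbitrary context-independent representations rather than invoking $f_{\mathbbm{1}}=\mathrm{id}$) and folds the measurement transformation into the commutative composition, whereas you treat the readout as a fixed affine functional applied after the $U,V,W$ sequence; neither change affects the substance of the argument.
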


\begin{proof}
Suppose that preparation results in an initial ontic state $\bm{\lambda} \in (\mathbb{Z}_2)^s$.
From (\ref{eq:emp}), non-contextual realisation of the protocol requires Eqs (\ref{c1}--\ref{c4}) to be satisfied. These describe evaluation of the computation for the four possible sequential contexts, where ontological representations of $U(k)$, $V(k)$, and $W(k)$, with $k\in\{0,1\}$, are determined through Eq.~(\ref{eq:ontf}) by $\{A_U(k),\bm{u}(k)\}$, $\{A_V(k),\bm{v}(k)\}$, and $\{A_W(k),\bm{w}(k)\}$, respectively, and of the transformation component of the measurement by $\{A_M,\bm{m}\}$,
\begin{widetext}
\begin{align}
\left[ \bm{\lambda} \oplus A_{U}(0) \bm{\lambda} \oplus A_{V}(0) \bm{\lambda} \oplus A_{W}(0) \bm{\lambda} \oplus A_M \bm{\lambda} \oplus \bm{u}(0) \oplus \bm{v}(0) \oplus \bm{w}(0) \oplus \bm{m} \right] \cdot \bm{\delta} = 0 \, ,
\label{c1}
\\
\left[ \bm{\lambda} \oplus A_{U}(0) \bm{\lambda} \oplus A_{V}(1) \bm{\lambda} \oplus A_{W}(1) \bm{\lambda} \oplus A_M \bm{\lambda} \oplus \bm{u}(0) \oplus \bm{v}(1) \oplus \bm{w}(1) \oplus \bm{m} \right] \cdot \bm{\delta} = 0 \, ,
\\
\left[ \bm{\lambda} \oplus A_{U}(1) \bm{\lambda} \oplus A_{V}(0) \bm{\lambda} \oplus A_{W}(1) \bm{\lambda} \oplus A_M \bm{\lambda} \oplus \bm{u}(1) \oplus \bm{v}(0) \oplus \bm{w}(1) \oplus \bm{m} \right] \cdot \bm{\delta} = 0\, ,
\\
\left[ \bm{\lambda} \oplus A_{U}(1) \bm{\lambda} \oplus A_{V}(1) \bm{\lambda} \oplus A_{W}(0) \bm{\lambda} \oplus A_M \bm{\lambda} \oplus \bm{u}(1) \oplus \bm{v}(1) \oplus \bm{w}(0) \oplus \bm{m} \right] \cdot \bm{\delta} = 1 \, .
\label{c4}
\end{align}
\end{widetext}
Under the assumption of non-contextuality, the equations are not jointly satisfiable. This can be deduced from the fact that the sum modulo $2$ of the right-hand sides is one, whereas the sum of the left-hand sides is zero, since each vector appears an even number of times leading to cancellations. Note that a contextual realisation would permit ontological representations to vary according to context; e.g.,~$\bm{u}(0)^{(4)} \neq \bm{u}(0)^{(5)}$. Contextually, we can always satisfy the equations. The conclusion is that, while $\oplus L$-ontological descriptions are possible, they are necessarily transformation contextual. 
\end{proof}

The above proof is similar to Mermin's parity version \cite{mermin:90a,mermin:90} of the Greenberger-Horne-Shimony-Zeilinger inquality-free argument for non-locality \cite{ghz:89,ghsz:90}, and is an instance of an all-versus-nothing proof of strong contextuality \cite{ccp}, albeit for transformation rather than BKS contextuality.

\begin{proposition}\label{prop:l2sc}
Strong transformation contextuality is necessary for $\oplus L$-ontological realisation of any non-linear commutative $l2$-TBQC.
\end{proposition}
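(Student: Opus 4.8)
The plan is to prove the contrapositive: if a commutative $l2$-TBQC deterministically computes a non-linear function $g\colon\mathbb{Z}_2^n\to\mathbb{Z}_2$ and admits a $\oplus L$-ontological realisation whose empirical model $e$ is \emph{not} strongly contextual, then a contradiction follows. First I would use determinism to rule out even a partial non-contextual explanation. Writing $C_{\bm x}$ for the sequential context fixed by the input $\bm x$, determinism means $e_{C_{\bm x}}$ is the point distribution on $g(\bm x)$. If $\mathrm{NCF}(e)=\omega>0$, then in a decomposition $e=\omega\,e^{\mathrm{NC}}+(1-\omega)\,e'$ the component $e^{\mathrm{NC}}_{C_{\bm x}}$ must equal that same point distribution in every context, so $e^{\mathrm{NC}}$ is itself a \emph{deterministic} non-contextual model reproducing $g$. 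Being deterministic it is an extremal point of the non-contextual polytope, and hence, as recalled above when discussing the non-contextual fraction, it is obtained by fixing a single, sequential-context-\emph{independent} deterministic function $f_U$ to each transformation $U$ of the protocol, together with a fixed initial ontic state $\bm\lambda\in\mathbb{Z}_2^s$.

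Next I would determine which input--output maps such a fixed $\oplus L$-assignment can implement. In a commutative $\oplus L$ ontology each $f_U$ has the affine form (\ref{eq:ontf}); writing $\{A_i^k,\bm u_i^k\}$ for the representation of the $i$th controlled transformation when its control bit is $k\in\{0,1\}$, and noting that the $\bmod 2$-linear control selects at step $i$ the value of an affine function $\ell_i(\bm x)$ of the input, the representation actually applied at step $i$ is $A_i^0\oplus\ell_i(\bm x)\!\left(A_i^0\oplus A_i^1\right)$ together with $\bm u_i^0\oplus\ell_i(\bm x)\!\left(\bm u_i^0\oplus\bm u_i^1\right)$; the fixed measurement contributes a fixed transformation with data $\{A_M,\bm m\}$ followed by the readout $\bm\lambda'\cdot\bm\delta$ of a fixed coordinate. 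Substituting into the composition rule for commuting $\oplus L$ transformations displayed above --- it is precisely commutativity that makes that rule additive, with no cross-terms between the selected transformations --- one finds that the outcome bit equals $b(\bm\lambda)\oplus\bigoplus_i\ell_i(\bm x)\,c_i(\bm\lambda)$, where $b(\bm\lambda),c_i(\bm\lambda)\in\mathbb{Z}_2$ depend on $\bm\lambda$ and on the fixed protocol data but not on $\bm x$.

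Finally, for each fixed $\bm\lambda$ this expression is affine over $\mathbb{Z}_2$ in $\bm x$, and composing it with the (also $\bmod 2$-linear) post-processing of the control computer keeps the end-to-end function affine in $\bm x$. Since $e^{\mathrm{NC}}$ reproduces $g$ on every input for the chosen $\bm\lambda$, the function $g$ would be affine, contradicting its non-linearity (equivalently, such an ontology would already furnish a purely $\oplus L$ algorithm for $g$, leaving no advantage to explain); hence no non-zero non-contextual fraction exists, \ie $\mathrm{CF}(e)=1$. I expect the one genuinely conceptual step to be the first --- the passage from determinism of the computation to the existence of a \emph{single} consistent deterministic $\oplus L$-assignment computing $g$, which is what makes even a sub-unit non-contextual fraction impossible; after that the argument is a direct application of (\ref{eq:ontf}) and the commuting-composition rule, generalising the parity obstruction of Proposition~\ref{prop:dkk} to the statement that commuting $\oplus L$ transformations implement only affine input--output maps.
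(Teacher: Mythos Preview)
Your argument is correct and follows essentially the same route as the paper: both reduce to showing that a single deterministic non-contextual $\oplus L$-assignment can only produce an affine input--output map via the commuting composition rule, and then use this to rule out any non-zero non-contextual fraction for a deterministic non-linear computation. Your first step --- using determinism of $e$ to force $e^{\mathrm{NC}}=e$ and then invoking extremality to pass to a single fixed $\bm\lambda$ and fixed $f_U$'s --- is a more explicit version of what the paper compresses into its final paragraph, but the substance is the same.
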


\begin{proof}
Let $\bm{i} \in (\mathbb{Z}_2)^r$ be the input of the computation and $\bm{\lambda} \in (\mathbb{Z}_2)^s$ be the initial ontic state resulting from the fixed preparation. The control bits $\bm{k} = \{k_i\}_{i=1}^t$ for the transformations to be performed are linearly determined from the inputs:
\begin{equation}\label{eq:t}
\bm{k}=B\bm{i} \oplus \bm{c} \, .
\end{equation}
for some $n\times r$ matrix $B$ and vector $\bm{c}$ of length $r$ over $\mathbb{Z}_2$. The transformations to be performed in sequence are $\{ U_i (k_i) \}_{i =1}^t$. In a non-contextual model, these will have ontological representations determined through Eq.~(\ref{eq:ontrep}) by $\{A_i(k_i),\bm{u}_i(k_i)\}$. These are necessarily linear in $k_i$ for each $i$, since entries in $A_i$ and $\bm{u}_i$ take values in $\mathbb{Z}_2$ and are functionally determined from $k_i$, but all functions of type $\mathbb{Z}_2 \rightarrow \mathbb{Z}_2$ are linear. 
Deterministic realisation of a function $g: (\mathbb{Z}_2)^r \rightarrow \mathbb{Z}_2$ by a non-contextual ontological model requires that, for all inputs $\bm{i}$,
\begin{align*}
&g(\bm{i}) = \\
& \left[ \left( I \oplus \bigoplus_{i=1}^{t} A_{U_i}(k_i) \oplus A_M \right)  \bm{\lambda} \oplus \bigoplus_{i=1}^{n} \bm{u}_i(k_i) \oplus \bm{u}_M \right] \cdot \bm{\delta} \, .
\end{align*}
This is a linear function since right-hand side expression is linear in $\bm{k}$, and in turn $\bm{k}$ is linear in $\bm{i}$, by Eq.~(\ref{eq:t}).

In a contextual model we could allow the ontological representations to have context dependence; i.e.,~$\{A_i(\bm{k}),\bm{u}_i(\bm{k})\}$. In this case the entries of the respective matrices and vectors are determined by functions $\mathbb{Z}_2^n \rightarrow \mathbb{Z}_2$, which can introduce non-linearity.

Moreover, if any fraction $p$ of the empirical behaviour can be described non-contextually, then with an average probability over all possible inputs of at least $p$, the $l2$-TBQC computes some linear function. Therefore deterministic computation of a non-linear function requires strong contextuality.
\end{proof}

Given two functions $g,h: \mathbb{Z}_2^r \rightarrow \mathbb{Z}_2$, we can define an average distance between these functions as
\[
d(g,h) := 2^{-r}\left| \left\{ \bm{i} \mid g(\bm{i}) \neq h(\bm{i}) \right\} \right| \, .
\]
This can be used to measure the degree of non-linearity of any function $g: \mathbb{Z}_2^r \rightarrow \mathbb{Z}_2$ as the distance to the closest linear function of that type,
\[
{\nu}(g) := \min \left\{ {d}(g,h) \mid h: \mathbb{Z}_2^r \rightarrow \mathbb{Z}_2 \text{ linear} \right\} \, .
\]

\begin{theorem}\label{thm:main}
If a commutative $l2$-TBQC, with resource empirical model $e$, probabilistically computes a function $g: \mathbb{Z}_2^r \rightarrow \mathbb{Z}_2$ with an average failure probability ${\varepsilon}$ over all $2^r$ possible inputs, then
\[
{\varepsilon} \geq \mathsf{NCF}(e) \, {\nu}(g) \, .
\]
\end{theorem}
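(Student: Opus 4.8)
The plan is to lift the deterministic argument of Proposition~\ref{prop:l2sc} to the probabilistic setting by decomposing the resource as in Eq.~\eqref{eqn:edecomp} and discarding its contextual part. Set $\omega := \mathsf{NCF}(e)$ and fix an optimal decomposition $e = \omega\, e^{\mathrm{NC}} + (1-\omega)\, e'$ with $e^{\mathrm{NC}}$ non-contextual. On input $\bm{i}\in\mathbb{Z}_2^r$ the protocol selects the sequential context $C(\bm{i})$ fixed by the control bits $\bm{k}=B\bm{i}\oplus\bm{c}$ of Eq.~\eqref{eq:t}, samples the measurement outcome from $e_{C(\bm{i})}$, and returns its image under the fixed $\bmod{2}$-linear post-processing. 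Since this ``sample then post-process'' map is affine in the outcome statistics, the success probability on input $\bm{i}$ factors as $s(\bm{i}) = \omega\, s^{\mathrm{NC}}(\bm{i}) + (1-\omega)\, s'(\bm{i})$, where $s^{\mathrm{NC}}$ and $s'$ are the success probabilities of the same protocol run against $e^{\mathrm{NC}}$ and $e'$. Averaging uniformly over the $2^r$ inputs and bounding $s'(\bm{i})\le 1$ yields
\[
1-\varepsilon \;=\; 2^{-r}\!\sum_{\bm{i}} s(\bm{i}) \;\le\; \omega\,\overline{s^{\mathrm{NC}}} + (1-\omega)\, ,
\]
hence $\varepsilon \ge \omega\,(1 - \overline{s^{\mathrm{NC}}})$, with $\overline{s^{\mathrm{NC}}}$ the input-average of $s^{\mathrm{NC}}$.

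It remains to prove $1 - \overline{s^{\mathrm{NC}}} \ge \nu(g)$, i.e.\ that no non-contextual resource can achieve average failure below $\nu(g)$. For this I would use the fact, recorded in the Quantification discussion, that the non-contextual (and, as required throughout, $\oplus L$-respecting) empirical models for the scenario form a polytope whose vertices are the deterministic assignments of functions $f_U$ of the form \eqref{eq:ontf} to the transformations, together with a definite initial ontic state. Decompose $e^{\mathrm{NC}} = \sum_j q_j\, e^{\mathrm{NC}}_j$ into such vertices; this is a single decomposition valid simultaneously in every context $C(\bm{i})$, since one global choice of the $f_U$ and of $\bm{\lambda}$ induces a consistent assignment in each sequential context. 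Running the protocol against a vertex $e^{\mathrm{NC}}_j$ is precisely a deterministic $\oplus L$-ontological realisation, so by the computation in the proof of Proposition~\ref{prop:l2sc} it outputs on input $\bm{i}$ a value $h_j(\bm{i})$ with $h_j:\mathbb{Z}_2^r\to\mathbb{Z}_2$ linear: the pre-post-processing output $[(\cdots)\bm{\lambda}\oplus\cdots]\cdot\bm{\delta}$ is linear in $\bm{k}$, hence in $\bm{i}$ by Eq.~\eqref{eq:t}, and a linear post-processing of a linear function is linear. Its average failure against $g$ is thus $d(g,h_j)\ge\nu(g)$, so $1-\overline{s^{\mathrm{NC}}} = \sum_j q_j\, d(g,h_j)\ge\nu(g)$; combined with the first paragraph this gives $\varepsilon\ge\omega\,\nu(g) = \mathsf{NCF}(e)\,\nu(g)$.

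The step that needs the most care is the reduction in the second paragraph: passing from the \emph{empirical} non-contextual fraction of $e$ to a genuine ensemble of \emph{deterministic} non-contextual $\oplus L$ realisations, each of which the linearity argument of Proposition~\ref{prop:l2sc} applies to verbatim. This rests on two ingredients already established above --- that an ontological model with stochastic transformations decomposes convexly into ones with deterministic transformations (the remark after Eq.~\eqref{eq:emp}), and that this, together with decomposing the fixed preparation distribution over point masses, exhausts the vertices of the non-contextual polytope --- so it can be cited rather than reproved. Everything else is routine: confirming affinity of the success probability in the resource, matching ``average over $2^r$ inputs'' to the uniform measure in the definitions of $d$ and $\nu$, and noting that composing a linear readout with the linear input-dependence keeps the bound $d(g,h_j)\ge\nu(g)$ available since each $h_j$ is linear.
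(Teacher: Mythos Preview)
Your proposal is correct and follows essentially the same route as the paper: decompose $e$ optimally as in Eq.~\eqref{eqn:edecomp}, use affinity of the success probability in the resource together with the trivial bound on the contextual part to obtain $\varepsilon \ge \mathsf{NCF}(e)\,\varepsilon_{e^{\mathrm{NC}}}$, and then invoke the linearity conclusion of Proposition~\ref{prop:l2sc} on the non-contextual part to get $\varepsilon_{e^{\mathrm{NC}}}\ge\nu(g)$. Your second paragraph spells out the vertex decomposition of $e^{\mathrm{NC}}$ more explicitly than the paper does, but this is elaboration rather than a different argument.
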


\begin{proof}
The average probability of success is ${p}_S:=1-{\varepsilon}$.
From (\ref{eqn:edecomp}), 
we can decompose the resource empirical model as
\[
e = \mathsf{NCF}(e) \, e^{\mathsf{NC}} + \mathsf{CF}(e) \, e' \, ,
\]
where $e'$ is necessarily strongly contextual. This allows us to similarly decompose the behaviour of the $l2$-TBQC, so that
\[
{p}_S = \mathsf{NCF}(e) \, {p}_{S,e^{\mathsf{NC}}} + \mathsf{CF}(e) \, {p}_{S,e'} \, ,
\]
where ${p}_{S,e^{\mathsf{NC}}}$ and ${p}_{S,e'}$ are the average probabilities of success that would be associated with resource empirical models $e^{\mathsf{NC}}$ and $e'$, respectively. At best, $e'$ enables deterministic computation of $g$. This leads to the inequalities
\begin{align}
{p}_S &\leq \mathsf{NCF}(e) \, {p}_{S,e^{\mathsf{NC}}} + \mathsf{CF}(e) \, , \nonumber \\
{\varepsilon} &\geq \mathsf{NCF}(e) \, {\varepsilon}_{e^{\mathsf{NC}}} \, , \label{eq:err}
\end{align}
where ${\varepsilon}_{e^{\mathsf{NC}}} = 1 - {p}_{S,e^{\mathsf{NC}}}$ is the average probability of failure associated with $e^{\mathsf{NC}}$.
From the proof of Propositon \ref{prop:l2sc}, 
we know that for a non-contextual resource empirical model any $l2$-TBQC can only compute convex mixtures of linear functions. Thus ${\varepsilon}_{e^{\mathsf{NC}}} \leq \tilde{\nu}(g)$, which combined with (\ref{eq:err}) yields the desired inequality.
\end{proof}

Theorem \ref{thm:main} extends Proposition \ref{prop:l2sc}, since, in particular it implies that deterministic computation (${\varepsilon}=0$) of a non-linear function [${\nu}(g)>0$] requires strong contextuality [$\mathsf{NCF}(e)=0$]. The proof here is
similar to that of Theorem 3 in \cite{abmprl}.

\section{Discussion}
The present results highlight the potential of sequential contextuality as a source of quantum advantage of a single qubit over arbitrarily many classical bits for a particular kind of computational task.
While the $\oplus L$-ontological assumptions are natural in the particular setting of restricted classical computation that we consider, a direction for future research will be to consider examples of sequential transformation contextuality in less restricted settings, like that of \cite{bravyi2017quantum}, as well as to explore other potential connections to quantum advantage, especially in single qubit systems \cite{knill1998power,galvao2003substituting}.
It also remains to be seen how the present notion of contextuality can be treated in resource-theoretic frameworks of the kind developed in \cite{grudka2014quantifying,horodecki2015axiomatic,abmprl,amaral2017noncontextual,duarte2017resource}.
A related analysis, in terms of irreversibility, of transformation-based protocols is contained in \cite{henaut2018tsirelson}, and, in the future, it may be interesting to consider advantages as arising from a combination of these phenomena.
From a foundational perspective, in light of the present analysis, the experimental results of \cite{barzenhanced,mod4} could already be said to provide indirect experimental evidence for a kind of sequential transformation contextuality, but this leaves open the possibility for experiments designed specifically to test for the feature, which might also aim to minimise potential issues, such as the detection loophole.

\prlsection{Acknowledgements}
The authors thank Samson Abramsky, Rui Soares Barbosa, Dan Browne, Ulysse Chabaud, Tom Douce, Pierre-Emmanuel Emeriau, Ernesto Galv\~{a}o, Frédéric Grosshans, Luciana Henaut, Matty Hoban, Aleks Kissinger, Jan-{\AA}ke Larsson, Damian Markham, and Anna Pappa for valuable discussions and comments.
Early ideas for this work were conceived while S.M.~was visiting the Simons Institute for the Theory of Computing at the University of California, Berkeley, as a participant of the Logical Structures in Computation program.
This project has received funding from the European Union’s Horizon 2020 Research and Innovation Programme under the Marie Skłodowska-Curie Grant Agreement No.~750523.

\bibliography{refs3}
\bibliographystyle{apsrev4-1}


\section{Appendix}
\subsection{Comparison with Spekkens' contextuality}

Within the ontological models framework as used by Spekkens~\cite{spekkenscontextuality},
in addition to the basic ingredients of ontological models that we have set out in the main text, a number of further features are imposed
or implicitly assumed
motivated by the intended interpretation of ontological models. Here, we have chosen to set out a more minimal definition of what we intend to mean by ontological models, and to explicitly state any such additional assumptions as they become relevant, preferring to see these as crucial to considerations and definitions of contextuality.

Two such features are that sequential composition should be respected at the ontological level,
\[
f_{U_t\cdots U_1} = f_{U_t} \circ \dots \circ f_{U_1} \, ;
\]
and that $f_{U^{(C)}} = f_{U^{(C')}}$ for sequential contexts ($C$) and ($C'$). Since these are the components of our definition of sequential transformation non-contextuality, then non-contextuality in our sense could be thought of as being implicitly baked-in to the more loaded version of the ontological models framework from the outset. From our perspective, however, this would be undesirable since it would fail to pick up on sequential transformation contextuality, a non-classical phenomenon that on the basis of our results we believe to be worthy of consideration.

Spekkens' generalised approach to non-contextuality is that ontological identifications such as $f_{U^{(C)}} = f_{U^{(C')}}$ should be imposed whenever there is an operational equivalence. Here, $U^{(C)}$ and $U^{(C')}$ would be said to be operationally equivalent if, for all choices of preparation and measurement, the outcome statistics for the prepare-transform-measure experiments with transformation $U^{(C)}$ and with transformation $U^{(C')}$ were equal. However, since $U^{(C)}$ designates the transformation $U$ when it appears in the sequence of transformations ($C$), and similarly for $U^{(C')}$, such statistics are operationally inaccessible. Without broadening what it means to be operationally equivalent, our notion of non-contextuality is not captured by the Spekkens approach.

Our approach to non-contextuality, on the other hand, is to assume that operational \emph{compositions} are respected at the ontological level, and that ontological representations are independent of operational context. Non-contextuality in the BKS and Spekkens senses are captured by this perspective as well, where now composition does not refer exclusively to sequential transformation of transformations, but also to composition of compatible measurements into a joint measurement, or composition of transformations or preparations through stochastic mixtures, etc. The presentation of the various notions of non-contextuality in the main text aims to facilitate this perspective, which will be more fully developed in a future article.


\subsection{Commutativity in $\oplus L$-ontologies}

In a $\oplus L$-ontology, transformations are built from $\mathsf{CNOT}$ and $\mathsf{NOT}$ gates. The action of the $\mathsf{CNOT}(i,j)$ gate with control bit $i$ and target bit $j$ on an ontic state $\bm{\lambda} \in \mathbb{Z}_2^s$ is a linear operation
\[
\mathsf{CNOT}(i,j) \, \bm{\lambda} = \left( I \oplus A(j,i) \right) \bm{\lambda} \, ,
\]
where $I$ and $A(j,i)$ are $s \times s$ matrices over $\mathbb{Z}_2$, the former being the identity matrix and the latter the matrix whose only non-zero entry is at position $(j,i)$. For composition of $\mathsf{CNOT}$ gates we have
\begin{align*}
\mathsf{CNOT}(k,l) \circ \mathsf{CNOT}(i,j) = \left( I \oplus A(l,k) \right) \left( I \oplus A(j,i) \right)& \\
= I \oplus A(l,k) \oplus A(j,i) \oplus \delta_{kj} A(l,i)& \, ,
\end{align*}
and similarly
\begin{align*}
\mathsf{CNOT}(i,j) \circ \mathsf{CNOT}(k,l) = \left( I \oplus A(j,i) \right) \left( I \oplus A(l,k) \right)& \\
= I \oplus A(j,i) \oplus A(l,k) \oplus \delta_{il} A(j,k)& \, .
\end{align*}
The gates commute when $k \neq j, i \neq l$; i.e.~the control bit for one gate cannot be the target bit for the other and vice versa.

The other basic building blocks for transformations are $\mathsf{NOT}$ gates. As a linear operation, the action of a $\mathsf{NOT}$ gate on the $i$th bit is simply addition by the vector $\bm{\delta}(i)$ whose only non-zero entry is in the $i$th position,
\[
\mathsf{NOT}(i) (\bm{\lambda}) = \bm{\lambda} \oplus \bm{\delta}(i) \, .
\]
$\mathsf{NOT}$ gates commute amongst themselves, while a $\mathsf{NOT}$ gate commutes with a $\mathsf{CNOT}$ gate if and only if it does not act on the control bit. With $\mathsf{NOT}$ acting on the target bit, it holds that
\begin{align*}
\mathsf{CNOT}(i,j) \circ \mathsf{NOT}(j) (\bm{\lambda}) = \left( I \oplus A(j,i) \right) \left(\bm{\lambda} \oplus \bm{\delta}(j)\right)& \\
= \left(I \oplus A(j,i) \right) \bm{\lambda} \oplus \bm{\delta}(j)& \\
= \mathsf{NOT}(j) \circ \mathsf{CNOT}(i,j) (\bm{\lambda})& \, ,
\end{align*}
where the second equality follows from the fact that $i\neq j$ since there exists a $\mathsf{CNOT}$ between the respective bits. With $\mathsf{NOT}$ acting on the control bit,
\begin{align*}
\mathsf{CNOT}(i,j) \circ \mathsf{NOT}(i) (\bm{\lambda}) = \left( I \oplus A(j,i) \right) \left(\bm{\lambda} \oplus \bm{\delta}(i)\right)& \\
= \left(I \oplus A(j,i) \right) \bm{\lambda} \oplus \bm{\delta}(i) \oplus \bm{\delta}(j)& \, ,
\end{align*}
whereas
\[
\mathsf{NOT}(i) \circ \mathsf{CNOT}(i,j) (\bm{\lambda}) = \left( I \oplus A(j,i) \right) \bm{\lambda} \oplus \bm{\delta}(i) \, .
\]

In \emph{commutative $\oplus L$-ontologies} we will therefore assume that the ontic state space $\mathbb{Z}_2^s$ can be partitioned into control and target bits and that $\mathsf{NOT}$ gates act only on target bits.
This is an obvious sufficient condition for commutativity of all transformations, though weaker conditions that ensure commutativity only for subsets of the possible transformations may be interesting to consider in future work.
As a linear operation, the ontological representation of any transformation $U$ is given by
\begin{equation}\label{eq:ontrep}
f_U(\bm{\lambda}) = \left( I \oplus A_U \right) \bm{\lambda} \oplus \bm{u} \, ,
\end{equation}
where $A_U$ is some $s \times s$ matrix over $\mathbb{Z}_2$ containing only off-diagonal entries, $\bm{u} \in \mathbb{Z}_2^s$ gives the combined action of any $\mathsf{NOT}$ gates, and for composition we have
\begin{align*}
f_{U_t} \circ \cdots \circ f_{U_1}(\bm{\lambda}) &= \left( I \oplus \bigoplus_{i=1}^{t} A_{U_i} \right) \bm{\lambda} \oplus \bigoplus_{i=1}^{t} \bm{u_i} \\
&= \bm{\lambda} \oplus \bigoplus_{i=1}^t A_{U_{i}} \bm{\lambda} \oplus \bigoplus_{i=1}^t \bm{u}_i \, .
\end{align*}

\end{document}